\newtheorem{thm}{Theorem}[subsection]
\newcommand{\field}[1]{\mathbb{#1}}
\newtheorem{proposition}[thm]{Proposition}
\DeclareMathOperator{\tr}{tr}
\DeclareMathOperator{\End}{End}
\DeclareMathOperator{\str}{str}
\title{\textbf{Anticommutative extension of the Adler map}}
\author[1,2]{S. Konstantinou-Rizos\thanks{skonstantin84@gmail.com, skonstantin@chesu.ru}}
\author[3]{A.V. Mikhailov\thanks{A.V.Mikhailov@leeds.ac.uk}}
\affil[1]{Institute of Mathematical Physics and Seismodynamics, Chechen State 
University, Russia}
\affil[2]{Faculty of Mathematics and Computer Technology, Chechen State 
University, Russia}
\affil[3]{Department of Applied Mathematics, University of Leeds, UK}
\begin{document}

\maketitle

\begin{abstract}
We construct  a noncommutative (Grassmann) extension of the well-known Adler Yang-Baxter map. It satisfies the Yang-Baxter equation, it is reversible and birational. Our extension preserves all the properties of the original map except the involutivity.
\end{abstract}

\section{Introduction}
 The study of the set-theoretical solutions of the Yang-Baxter equation, which was formally proposed by Drinfeld in \cite{Drinfeld}, gained a more algebraic flavour in \cite{Buchstaber}. The term ``Yang-Baxter maps" for such solutions was proposed by Veselov in \cite{Veselov}. Those Yang-Baxter maps which possess Lax representation \cite{Veselov2} are of particular interest, since they are associated to integrable mappings \cite{Veselov, Veselov3}, and they also possess a natural connection with integrable partial differential equations through Darboux transformations  \cite{Sokor-Sasha}. On the other hand, noncommutative extensions of integrable systems have been constantly of interest over the last few decades. Recently, in \cite{Georgi-Sasha}, Grassmann extensions of Darboux transformations were constructed for a class of nonlinear Schr\"odinger equations, together with their associated super (Grassmann-extended) differential-difference and difference-difference systems. Following the way proposed in \cite{Georgi-Sasha}, super differential-difference and difference-difference systems were constructed for a noncommutative Darboux transformation for the supersymmetric KdV equation in \cite{Xue-Levi-Liu}, and a year later for a generalised super KdV system in \cite{Xue-Liu}. Moreover, the study of the extension of the theory of Yang-Baxter maps in noncommutative (Grassmann) setting was initiated in \cite{GKM}, where the first examples of Grassmann extended Yang-Baxter maps were constructed. 


In this paper, we construct a noncommutative (Grassmann) extension of the well-celebrated Adler map \cite{Adler}, which, over the last couple of decades, has attracted the interest of many authors in the area of integrable systems. This is due to the fact that, not only it is associated with several concepts of integrability, such as Darboux \& B\"acklund transformations, integrable lattice equations and the theory of Yang-Baxter maps, but also it possesses a quite simple form; thus, it constitutes a convenient map in terms of applications. However, the Adler map lacks the property of being non-involutive, which means that it is a trivial map in terms of its dynamics. The noncommutative extension of the Adler map, which we present in our paper, has equally elegant form and preserves all the properties of the original map; namely, it satisfies the Yang-Baxter equation, it is reversible and birational. Nonetheless, it has the extra significant property of being non-involutive.

The paper is organised as follows. In the next section, we present the basic properties of Grassmann algebras. For more information on Grassmann analysis one can consult \cite{Berezin}. Moreover, we explain some properties of the Grasmmann extended Yang-Baxter maps possessing Lax representation. In section 3, using some recent results on Grassmann extensions of Darboux transformations in the case of a generalised super KdV system \cite{Xue-Liu}, we construct a noncommutative extension of the Adler map. Additionally, we show that, in general, noncommutative extensions of involutive Yang-Baxter maps do not preserve their property of involutivity in their noncommutative version. Finally, in section 4, we close with some concluding remarks.

\section{Preliminaries}
\subsection{Grassmann algebra}
Let $G$ be a $\field{Z}_2$-graded algebra over $\field{C}$ with a unity element $e$. As a linear space $G$  is a direct sum $G=G_0\oplus G_1$ (mod 2), such that $G_iG_j\subseteq G_{i+j}$. Those elements of $G$ that belong either to $G_0$ or to $G_1$ are called \textit{homogeneous}, the ones in $G_1$ are called \textit{odd} (or {\sl fermionic}), while those in $G_0$ are called \textit{even} (or {\sl bosonic}) and $e\in G_0$.

By definition, the parity $|a|$ of an even homogeneous element $a$ is $0$, and it is $1$ for odd homogeneous elements. The parity of the product $|ab|$ of two homogeneous elements is a sum of their parities: $|ab|=|a|+|b|$. Grassmann commutativity means that $ba=(-1)^{|a||b|}ab$ for any homogeneous elements $a$ and $b$. In particular, $\alpha^2=0$, for all odd elements $\alpha\in G_1$, and 
even elements commute with all the elements of $G$.

Lastly, for a square matrix, $M$, of the following block-form
\begin{equation*}
M=\left(
\begin{matrix}
 P & \Pi \\
 \Lambda & L
\end{matrix}\right),
\end{equation*}
where $P$ and $L$ are matrices with even entries, while $\Pi$ and $\Lambda$ possess only odd entries (the block matrices are not necessarily square matrices), we can define the \textit{supertrace} --and denote it by $\str(M)$-- to be the following quantity
\begin{equation*}
\str(M)=\tr (P)-\tr (L),
\end{equation*}
where  $\tr(.)$ is the usual trace of a matrix. We shall use the supertrace later on to generate invariants of the Grassmann extended Adler map.

\subsection{Yang-Baxter map and its Lax reperesentation in the Grassmann case}
Let $V_G=\{(a,\alpha)\,|\, a\in G_0,\  \alpha\in G_1\}$. The definitions of Yang-Baxter maps in the Grassmann and commutative cases formally coincide; the only difference is the set of the objects of the maps. In particular, we say that  a map $S\in\End(V_G\times V_G)$
\begin{equation}\label{Smap}
((x,\chi),(y,\psi))\stackrel{S}{\mapsto}\left((u,\xi),(v,\eta)\right),
\end{equation}
is a \textit{Grassmann extended} Yang-Baxter map if it satisfies the 
\textit{Yang-Baxter equation}:
\begin{equation*}
S^{12}\circ S^{13} \circ S^{23}=S^{23}\circ S^{13} \circ S^{12}.
\end{equation*}
Here $S^{ij}\in \End(V_G\times V_G\times V_G)$, $i,j=1,2,3$, $i\neq j$, are 
defined by the following relations
\begin{equation*}
S^{12}=S\times id, \quad S^{23}=id\times S \quad \text{and} \quad S^{13}=\pi^{12} S^{23} \pi^{12},
\end{equation*}
where $\pi^{12}$ is the involution defined by $\pi^{12}((x,\chi),(y,\psi),(z,\zeta))=((y,\psi),(x,\chi),(z,\zeta))$. Moreover, map (\ref{Smap}) is called \textit{reversible} if $S^{21}\circ 
S^{11}=id$, where $S^{21}$ follows from the definition of $S^{12}$ if we change $(u,\xi)\leftrightarrow (v,\eta)$ and $(x,\chi)\leftrightarrow (y,\psi)$.

Furthermore, we shall be using the term \textit{Grassmann extended parametric Yang-Baxter map} if two parameters $a,b\in G_0$ are involved in the definition of (\ref{Smap}), namely we have a map
\begin{equation}\label{SPmap}
S_{a,b}: ((x,\chi),(y,\psi))\mapsto\left((u,\xi),(v,\eta)\right),
\end{equation}
satisfying the \textit{parametric Yang-Baxter equation}
\begin{equation}\label{SYB_eq1}
S^{12}_{a,b}\circ S^{13}_{a,c} \circ S^{23}_{b,c}=S^{23}_{b,c}\circ S^{13}_{a,c} \circ S^{12}_{a,b}.
\end{equation}

Similar to \cite{Veselov2}, we can consider a problem of refactorisation 
\begin{equation}\label{eqLax}
L_a(u,\xi;\lambda)L_b(v,\eta;\lambda)=L_b(y,\psi;\lambda)L_a(x,\chi;\lambda)
\end{equation}
where $L_a(x,\chi;\lambda)$ is a matrix whose entries are Grassmann-valued, $x,a\in G_0,\chi\in G_1$ and $\lambda$ is a complex variable (a {\em spectral parameter}). The problem can be formulated as following: for a given set $\{a,b,x,y\in G_0,\psi,\chi\in G_1\}$ find such $\{u,v\in G_0,\xi,\eta\in G_1\}$ that equation (\ref{eqLax}) is satisfied identically in $\lambda$. If this problem of refactorisation has a unique solution, then it defines a map (\ref{SPmap}) which satisfies the YB equation (\ref{SYB_eq1}) and it is reversible. The proof of the latter statement is exactly the same as in the commutative case \cite{Veselov3}.

The  invariants of a Grassmann-valued Yang-Baxter map can be found using the supertrace. Applying the supertrace to equation (\ref{eqLax}), we obtain that $\str(L_b(y,\psi)L_a(x,\chi))$ is a generating function of the invariants associated Yang-Baxter map $S_{a,b}$.

\section{Grassmann extension of the Adler map}
A generalised super KdV system \begin{eqnarray}\label{superKdV}
u_t&=&u_{xxx}-6u u_x+6\xi_{xx}\eta+6\eta_{xx}\xi,\nonumber\\
\xi_t &=& 4\xi_{xxx}-6u\xi_x-3u_x\xi, \\
\eta_t&=&4\eta_{xxx}-6u\eta_x-3u_x\eta,\nonumber
\end{eqnarray}
where $u=u(x,t)$ is an even variable, whereas $\xi=\xi(x,t)$ and $\eta=\eta(x,t)$ are odd, was presented and studied in \cite{HolodPak}. The  Darboux-B\"acklund transformations and corresponding noncommutative discrete integrable systems associated with (\ref{superKdV}) were recently constructed in \cite{Xue-Liu}. A Lax matrix-operator associated with (\ref{superKdV}) is given by the following
\begin{equation}\label{x-Lax}
\mathcal{L}=D_x-\lambda \left(
\begin{matrix}
 0 & 0 & 0 \\
 e & 0 & 0 \\
0 & 0 & 0 
\end{matrix}\right)-
\left(
\begin{matrix}
 0 & e & 0 \\
 u & 0 & \xi \\ 
\eta & 0 & 0 
\end{matrix}\right),
\end{equation}
which is the spatial part of the Lax pair for (\ref{superKdV}). We omit the temporal part of the Lax pair, since we shall not be using it in this text.

A Grassmann extended Darboux transformation associated to the Lax operator (\ref{x-Lax}) is represented by the following matrix (see \cite{Xue-Liu})
\begin{equation}\label{Darboux-superKdV}
W=\lambda \left(\begin{matrix} 0 & 0 & 0 \\ e & 0 & 0 \\ 0 & 0 & 0 \end{matrix}\right)+\left(\begin{matrix}\frac{v_1}{2} & e & 0\\ \frac{v_1^2}{4}-p_1+\xi\eta_{\left[1\right]} & \frac{v_1}{2} & \xi \\ \eta_{\left[1\right]} & 0 & e\end{matrix}\right),
\end{equation}
where $v_1$ is an even variable, $p_1\in\field{C}$ and $\xi, \eta_{\left[1\right]}$ are odd. In what follows, we use matrix (\ref{Darboux-superKdV}) to construct a Grassmann extended Yang-Baxter map.

\subsection{Noncommutative Adler map}
According to the Grassmann extended Darboux transformation in (\ref{Darboux-superKdV}), changing $(v_1/2,\xi,\eta_{\left[1\right]};p_1)\rightarrow (x,\chi_1,\chi_2;a)$, we consider the following matrix
\begin{equation}\label{M-Adler-1}
M_a(x,\pmb{\chi})=\lambda M_1 + M_0=\lambda \left(\begin{matrix} 0 & 0 & 0 \\ e & 0 & 0 \\ 0 & 0 & 0 \end{matrix}\right)+\left(\begin{matrix}x & e & 0\\x^2-a+\chi_1\chi_2 & x & \chi_1 \\ \chi_2 & 0 & e\end{matrix}\right),
\end{equation}
where $\pmb{\chi}=(\chi_1,\chi_2)$, and we substitute it into the Lax equation. Then, we have the following.

\begin{proposition}(Noncommutative Adler map)  \label{prop}
Let $x+y$ be an invertible element of the Grassmann algebra. Then, the matrix refactorization problem
\begin{equation}\label{LaxEq-Adler}
M_a(u,\pmb{\xi})M_b(v,\pmb{\eta})=M_b(y,\pmb{\psi})M_a(x,\pmb{\chi}),
\end{equation}
where $M_a=M_a(x,\pmb{\chi})$ is given by $(\ref{M-Adler-1})$ is equivalent to a map 
\begin{equation}\label{mapAdler}
((x,\pmb{\chi}),(y,\pmb{\psi}))\stackrel{S_{a,b}}{\longmapsto} ((u,\pmb{\xi}),(v,\pmb{\eta})),
\end{equation}
given by the following
\begin{subequations}\label{superAdler}
\begin{align}
x\mapsto u&=y+\frac{a-b}{x+y-\chi_1\psi_2};\label{superAdler-a}\\
\chi_1\mapsto \xi_1 &=\psi_1-\frac{a-b}{x+y}\chi_1;\label{superAdler-b}\\
\chi_2\mapsto \xi_2&=\psi_2;\label{superAdler-c}\\
y\mapsto v&=x+\frac{b-a}{x+y-\chi_1\psi_2};\label{superAdler-d}\\
\psi_1\mapsto \eta_1&=\chi_1;\label{superAdler-e}\\
\psi_2\mapsto \eta_2&=\chi_2+\frac{a-b}{x+y}\psi_2,\label{superAdler-f}
\end{align}
\end{subequations}
which satisfies the Yang-Baxter equation (\ref{SYB_eq1}), is a parametric, 
reversible, and birational   map with invariants 
\begin{equation}\label{sinvAdler}
I_1=x+y, \qquad I_0=\chi_1\chi_2+\psi_1\psi_2.
\end{equation}
\end{proposition}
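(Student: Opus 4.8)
The plan is to establish the equivalence of the refactorisation problem $(\ref{LaxEq-Adler})$ with the explicit map $(\ref{superAdler})$ first, since the remaining structural properties then follow with little extra effort. Writing $M_a(x,\boldsymbol{\chi})=\lambda M_1+M_0(x,\boldsymbol{\chi};a)$ as in $(\ref{M-Adler-1})$, the first observation is that $M_1^2=0$, so both sides of $(\ref{LaxEq-Adler})$ are quadratic in $\lambda$ with vanishing leading coefficient. The refactorisation is therefore equivalent to the two matrix equations obtained by equating the coefficients of $\lambda^1$ and $\lambda^0$:
\begin{equation*}
M_1M_0(v,\boldsymbol{\eta};b)+M_0(u,\boldsymbol{\xi};a)M_1=M_1M_0(x,\boldsymbol{\chi};a)+M_0(y,\boldsymbol{\psi};b)M_1,
\end{equation*}
\begin{equation*}
M_0(u,\boldsymbol{\xi};a)\,M_0(v,\boldsymbol{\eta};b)=M_0(y,\boldsymbol{\psi};b)\,M_0(x,\boldsymbol{\chi};a).
\end{equation*}
A direct computation of the first equation immediately yields $u+v=x+y$, so the invariant $I_1$ is built into the problem; the substantive content lies in expanding the nine entries of the second equation while keeping careful track of the Grassmann signs.

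I expect the main obstacle to be the correct inversion of the even element $x+y-\chi_1\psi_2$ that appears in $(\ref{superAdler-a})$ and $(\ref{superAdler-d})$. The key point is that $\chi_1\psi_2$ is even but nilpotent, since $(\chi_1\psi_2)^2=0$ by $\chi_1^2=\psi_2^2=0$; as $x+y$ is invertible by hypothesis, the geometric series truncates to
\begin{equation*}
(x+y-\chi_1\psi_2)^{-1}=(x+y)^{-1}+(x+y)^{-1}\chi_1\psi_2\,(x+y)^{-1}.
\end{equation*}
Using this, I would solve the $\lambda^0$ equation for the even unknowns $u,v$ and the odd unknowns $\boldsymbol{\xi},\boldsymbol{\eta}$, check consistency against $u+v=x+y$, and recover exactly $(\ref{superAdler})$. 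The crucial byproduct of the computation is that this solution is \emph{unique}.

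Uniqueness is decisive: by the general principle recorded after $(\ref{eqLax})$, a refactorisation with a unique solution automatically defines a map satisfying the parametric Yang-Baxter equation $(\ref{SYB_eq1})$ and which is reversible, the argument being identical to the commutative case. Birationality is then read off directly from $(\ref{superAdler})$, since the map and its inverse --- obtained by interchanging the input and output triples together with the parameters $a,b$ --- are composed only of sums, products and inverses of invertible even elements, which are rational operations in $G$.

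It remains to confirm the invariants $(\ref{sinvAdler})$. I would follow the supertrace recipe of Section 2: applying $\str$ to $(\ref{LaxEq-Adler})$ shows that $\str\bigl(M_b(y,\boldsymbol{\psi})M_a(x,\boldsymbol{\chi})\bigr)$ is an $S_{a,b}$-invariant generating function, and expanding it in powers of $\lambda$ should reproduce $I_1$ and $I_0$. As an independent check, one can verify $(\ref{sinvAdler})$ directly from $(\ref{superAdler})$: adding $(\ref{superAdler-a})$ and $(\ref{superAdler-d})$ gives $u+v=x+y$, while, using that the even coefficient $\tfrac{a-b}{x+y}$ commutes with $\chi_1$, the two cross terms $\mp\tfrac{a-b}{x+y}\chi_1\psi_2$ cancel and leave $\xi_1\xi_2+\eta_1\eta_2=\chi_1\chi_2+\psi_1\psi_2$.
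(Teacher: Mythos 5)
Your proposal is correct and follows essentially the same route as the paper: expanding (\ref{LaxEq-Adler}) in powers of $\lambda$, solving the resulting system using the nilpotency of $\chi_1\psi_2$ (the paper multiplies numerator and denominator by the conjugate $x+y+\chi_1\psi_2$, which is equivalent to your truncated geometric series), invoking uniqueness of the refactorisation for the Yang--Baxter property and reversibility, using the interchange symmetry $(u,\pmb{\xi},v,\pmb{\eta},a,b)\leftrightarrow(y,\pmb{\psi},x,\pmb{\chi},b,a)$ for birationality, and using the supertrace for invariants. The one small inaccuracy is your expectation that the supertrace expansion reproduces $I_1$ and $I_0$ separately --- it only yields the combination $I_1^2+I_0$ --- but your direct verification from (\ref{superAdler}) establishes their separate invariance, which is exactly how the paper handles this point as well.
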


\begin{proof}
Equation (\ref{LaxEq-Adler}) implies that $\xi_2=\psi_2$ and $\eta_1=\chi_1$, and, using the latter, the following system of equations
\begin{subequations}\label{uvxieta}
\begin{align}
u+v&=x+y,\label{uvxieta-a}\\
uv+v^2-b+\chi_1\eta_2&=yx+x^2-a+\chi_1\chi_2,\label{uvxieta-b}\\
u^2-a+\xi_1\psi_2+uv&=y^2-b+\psi_1\psi_2+yx,\label{uvxieta-c}\\
u\chi_1+\xi_1=y\chi_1+\psi_1,& \quad \psi_2v+\eta_2=\psi_2x+\chi_2,\label{uvxieta-d}\\
(u^2-a+\xi_1\psi_2)v+u(v^2-b+\chi_1\eta_2)+\xi_1\eta_2&=(y^2-b+\psi_1\psi_2)x+y(x^2-a+\chi_1\chi_2)+\psi_1\chi_2,\label{uvxieta-e}
\end{align}
\end{subequations}
for variables $u$, $v$, $\xi_1$ and $\eta_2$. Now, from equations (\ref{uvxieta-b}) and (\ref{uvxieta-c}) we obtain
\begin{subequations}\label{eqsu-v}
\begin{align}
v(x+y)-b+\chi_1\eta_2=x(y+x)-a+\chi_1\chi_2,\\
u(x+y)-a+\xi_1\psi_2=y(x+y)-b+\psi_1\psi_2,
\end{align}
\end{subequations}
where we have made use of (\ref{uvxieta-a}). Moreover, expressing $\xi_1$ and $\eta_2$ in terms of $u$ and $v$, using equations (\ref{uvxieta-d}), and substituting to (\ref{eqsu-v}) we find that $u$ and $v$ are given by (\ref{superAdler-a}) and (\ref{superAdler-d}), respectively. Using the latter, from (\ref{uvxieta-d}) follows that
\begin{equation*}
\xi_1=\psi_1-\frac{a-b}{x+y-\chi_1\psi_2}\chi_1,\qquad \eta_2=\chi_2+\frac{a-b}{x+y-\chi_1\psi_2}\psi_2.
\end{equation*}
In the above expressions, we multiply both the numerators and the denominators with the conjugate expression of the latter, namely $x+y+\chi_1\psi_2$, and we use the fact that $\chi_1^2=\psi_2^2=0$. Then, it follows that $\xi_1$ and $\eta_2$ are given by (\ref{superAdler-b}) and (\ref{superAdler-f}), respectively. Lastly, equation  (\ref{uvxieta-e}) is satisfied in view of the rest ((\ref{uvxieta-a})- (\ref{uvxieta-d})).

The supertrace of the quantity $M_b(y,\pmb{\psi})M_a(x,\pmb{\chi})$ reads:
\begin{equation*}
\str(M_b(y,\pmb{\psi})M_a(x,\pmb{\chi}))=2\lambda +(x+y)^2+\chi_1\chi_2+\psi_1\psi_2-a-b-e.
\end{equation*}
Thus, $I=(x+y)^2+\chi_1\chi_2+\psi_1\psi_2=I_1^2+I_0$ is invariant of the map (\ref{mapAdler})-(\ref{superAdler}), where $I_1$ and $I_0$ are given by (\ref{sinvAdler}). However, it can be readily verified that $I_1$ and $I_0$ are invariants themselves. 

From the uniqueness of the refactorisation problem (\ref{LaxEq-Adler}) it follows that the map (\ref{mapAdler})-(\ref{superAdler}) satisfies the Yang-Baxter equation and it is reversible. The birationality of the map follows from the fact that (\ref{LaxEq-Adler}) admits the symmetry 
$(u,\pmb{\xi},v,\pmb{\eta},a,b)\leftrightarrow (y,\pmb{\psi},x,\pmb{\chi},b,a)$.

\end{proof}

Setting  all odd variables in 
(\ref{superAdler}) equal to zero and considering even variables and parameters $a,b$ to be $\field{C}$-valued (the bosonic limit) we obtain the standard Adler  map (\ref{AdlerMap})
\begin{equation}\label{AdlerMap}
(x,y) \stackrel{Y_{a,b}}{\longmapsto} 
\left(y+\frac{a-b}{x+y},x+\frac{b-a}{x+y}\right).
\end{equation}
The Adler map (\ref{AdlerMap}) occurs from the 3-D consistent discrete potential KdV equation \cite{Frank, PNC}. Its Lax representation \cite{Veselov2, Veselov3} occurs from the bosonic limit of (\ref{LaxEq-Adler}); that is, it is given by
\begin{equation*}
L_a(u)L_b(v)=L_b(y)L_a(x),
\end{equation*}
where $L_a=L_a(x)$ is given by $M_a(x,\pmb{\chi})$ in (\ref{M-Adler-1}) if one sets the odd variables equal to zero, namely
\begin{equation*}
L_a(x)=
\left(\begin{matrix}
x & e \\
x^2+a-\lambda & x
\end{matrix}\right).
\end{equation*} 

It can be readily verified that the Adler map (\ref{AdlerMap}) is involutive; yet, for its Grassmann extension (\ref{mapAdler})-(\ref{superAdler}) we have the following.

\begin{proposition}\label{Supersymmetric non-involutivity}
The map $S_{a,b}\in\End(V_G\times V_G)$ given by (\ref{superAdler}) is non-involutive.
\end{proposition}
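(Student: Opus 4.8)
The plan is to prove non-involutivity by direct computation: I would compose the map $S_{a,b}$ with itself and exhibit at least one coordinate whose image under $S_{a,b}^2$ differs from the identity. Since involutivity means $S_{a,b}\circ S_{a,b}=\mathrm{id}$, it suffices to produce a single component that fails to return to its starting value. First I would write out $S_{a,b}((x,\pmb{\chi}),(y,\pmb{\psi}))=((u,\pmb{\xi}),(v,\pmb{\eta}))$ explicitly using \eqref{superAdler}, and then apply $S_{a,b}$ again to the output pair $((u,\pmb{\xi}),(v,\pmb{\eta}))$, obtaining a second image which I will denote $((\tilde u,\tilde{\pmb{\xi}}),(\tilde v,\tilde{\pmb{\eta}}))$.

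The key simplifying observation is that the even variables already behave well under composition. Because $I_1=x+y$ is an invariant, we have $u+v=x+y$, and one checks directly that the bosonic part of the map is involutive: substituting $u=y+\frac{a-b}{x+y-\chi_1\psi_2}$ and $v=x+\frac{b-a}{x+y-\chi_1\psi_2}$ back into \eqref{superAdler-a} returns $x$ up to odd corrections. So I expect the even coordinates $\tilde u,\tilde v$ to coincide with $x,y$ (possibly modulo nilpotent terms), and the genuine failure of involutivity must live in the odd sector. Therefore the heart of the proof is to track the four odd components $\xi_1,\xi_2,\eta_1,\eta_2$ through two applications of the map.

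Here the crucial structural feature is the asymmetry between the two pairs of odd variables in \eqref{superAdler}: equations \eqref{superAdler-c} and \eqref{superAdler-e} give $\xi_2=\psi_2$ and $\eta_1=\chi_1$ (mere transpositions), whereas \eqref{superAdler-b} and \eqref{superAdler-f} carry a genuine nontrivial coefficient $\frac{a-b}{x+y}$. I would focus on the component $\xi_1$. Applying the map once gives $\xi_1=\psi_1-\frac{a-b}{x+y}\chi_1$. Applying it a second time, the new first-odd-of-the-first-pair becomes $\eta_1$ of the first stage fed through the same formula; since $\eta_1=\chi_1$ but the roles of the even invariant and the parameter difference reappear with the same sign $\frac{a-b}{x+y}$ rather than cancelling, the two corrections add instead of annihilating. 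I expect to find $\tilde\xi_1 = \chi_1 - 2\frac{a-b}{x+y}(\text{odd term})$ or a similarly non-vanishing nilpotent discrepancy, witnessing $S_{a,b}^2\neq\mathrm{id}$.

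The main obstacle I anticipate is purely bookkeeping: one must be scrupulous with the Grassmann signs and with the fact that products like $\chi_1\psi_2$ are nilpotent, so that denominators such as $x+y-\chi_1\psi_2$ can be expanded to first order only, and any term already carrying one odd factor kills further odd contributions. The cleanest presentation is probably to exhibit explicitly the composite formula for one selected odd coordinate — say $\tilde\xi_1$ or $\tilde\eta_2$ — show its difference from the corresponding input is a nonzero Grassmann-odd expression proportional to $(a-b)$, and conclude that $S_{a,b}\circ S_{a,b}\neq\mathrm{id}$ whenever $a\neq b$. This single nonvanishing component is logically sufficient, so I would avoid computing the full composite and instead isolate the most economical witness.
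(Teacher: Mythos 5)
Your proposal is correct and takes essentially the same approach as the paper: compose the map with itself and exhibit one odd coordinate that fails to return to its initial value — the paper simply picks the even more economical witness $\chi_2$, where (\ref{superAdler-f}) at the first step followed by the pure transposition (\ref{superAdler-c}) at the second gives $\chi_2''=\psi_2'=\chi_2+\frac{a-b}{x+y}\psi_2\neq\chi_2$ with no denominators, invariants, or expansions required. One small correction to your anticipated formula: using $\psi_1'=\chi_1$, $\chi_1'=\psi_1-\frac{a-b}{x+y}\chi_1$ and the invariance $x'+y'=x+y$, the discrepancy is $\tilde\xi_1-\chi_1=-\frac{a-b}{x+y}\psi_1+\frac{(a-b)^2}{(x+y)^2}\chi_1$ (a $\psi_1$ term plus a squared-coefficient $\chi_1$ term, not a doubled correction), but this is still a nonzero odd element whenever $a\neq b$, so your argument goes through.
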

\begin{proof}
If we denote 
$$(x,\chi_1,\chi_2,y,\psi_1,\psi_2) 
 \stackrel{S_{a,b}}{\mapsto} ( x',\chi_1',\chi_2', 
y',\psi_1',\psi_2')  
\stackrel{S_{a,b}}{\mapsto}(x'',\chi_1'', \chi_2'',
y'',\psi_1'',\psi_2''), 
$$
then it follows from (\ref{superAdler-c}) 
and (\ref{superAdler-f}) that
\begin{equation*}
\chi_2''=\psi_2'=\chi_2+\frac{a-b}{x+y}\psi_2\ne \chi_2
\end{equation*}
which implies that
\begin{equation*}
S_{a,b}\circ S_{a,b} \neq id,
\end{equation*}  
and this completes the proof.
\end{proof}

\section{Conclusions}
We contributed to the direction initiated in \cite{GKM}, by constructing a noncommutative (Grassmann) extension of the famous Adler map, using some recent results on noncommutative Darboux transformations for a generalised super KdV system \cite{Xue-Liu}. We showed that, in contrast to the original map, the Grassmann extended Adler map is not-involutive. This is an unexpected and quite interesting phenomenon, since involutive maps possess trivial dynamics.

In \cite{Xue-Liu}, Darboux transformations for the system (\ref{superKdV}) are constructed in three different cases. The first case corresponds to the Darboux matrix (\ref{Darboux-superKdV}) which we used to construct the noncommutative Adler map (\ref{mapAdler})-(\ref{superAdler}). The second case of Darboux transformation, in the same paper \cite{Xue-Liu}, is associated with the following matrix
\begin{equation}\label{Darboux-superKdV-1}
W_2=\lambda \left(\begin{matrix} 0 & 0 & 0 \\ e & 0 & 0 \\ 0 & 0 & e\end{matrix}\right)+\left(\begin{matrix}\frac{v_1}{2} & e & -\xi_{\left[1\right]}\\ \frac{v_1^2}{4}-p_1 & \frac{v_1}{2} & -\frac{v_1}{2}\xi_{\left[1\right]} \\ -\frac{v_1}{2}\eta & -\eta & -p_1-\xi_{\left[1\right]}\eta\end{matrix}\right).
\end{equation}
Changing in the above 
$(v_1/2,-\eta,-\xi_{\left[1\right]};p_1)\rightarrow(x,\chi_1,\chi_2;a)$, we 
obtain a Lax operator 
\begin{equation*}
N_a(x,\pmb{\chi})=\lambda \left(\begin{matrix} 0 & 0 & 0 \\ e & 0 & 0 \\ 0 & 0 & e \end{matrix}\right)+\left(\begin{matrix}x & e & \chi_2\\x^2-a & x & x\chi_2 \\ x\chi_1 & \chi_1 & \chi_1\chi_2-a\end{matrix}\right),
\end{equation*} 
which  we used for the construction of a Yang-Baxter map (similar to Proposition \ref{prop}). Surprisingly the map obtained in this way proved to be exactly the same as (\ref{mapAdler})-(\ref{superAdler}) and, moreover, the differential-difference equations corresponding to (\ref{Darboux-superKdV}) and (\ref{Darboux-superKdV-1}) obtained in \cite{Xue-Liu} coinside after a relabeling of variables. This suggests that the Darboux transformations (\ref{Darboux-superKdV}) and (\ref{Darboux-superKdV-1}) are equivalent in a certain sense, namely they should be related via a transformation, although not necessarily a trivial one.

\section*{Acknowledgements}
A.V.M. acknowledges support from the Leverhulme Trust.

\end{document}